\journal{Systems and Control Letters}
\NewDocumentCommand{\evalat}{sO{\big}mm}{%
  \IfBooleanTF{#1}
   {\mleft. #3 \mright|_{#4}}
   {#3#2|_{#4}}%
}
\newtheorem{theorem}{Theorem}
\newenvironment{proof}{\noindent{\bf Proof:}}{$\hfill \Box$ \vspace{10pt}}
\newtheorem{theorem1}{Theorem1}
\newtheorem{lemma}[theorem1]{Lemma}
\begin{document}

\begin{frontmatter}

\title{Variational dynamic interpolation for kinematic systems on trivial principal bundles}

\author{Sudin Kadam\corref{cor1}}
\ead{sudin@sc.iitb.ac.in}
\author{Ravi N. Banavar\corref{cor2}}
\ead{banavar@iitb.ac.in}
\cortext[cor1]{Corresponding author}
\address{Systems and Control Engineering Department, Indian Institute of Technology Bombay, Mumbai, India, 400072}
    
\begin{abstract}
This article presents the dynamic interpolation problem for locomotion systems evolving on a trivial principal bundle $Q$. Given an ordered set of points in $Q$, we wish to generate a trajectory which passes through these points by 
synthesizing suitable controls. The global product structure of the trivial bundle is used to obtain an induced Riemannian product metric on $Q$. The squared $L^2-$norm of the covariant acceleration is considered as the cost function, and its  first order variations are taken for generating the trajectories. The nonholonomic constraint is enforced through the local form of the principal connection and the group symmetry is employed for reduction. The explicit form of the Riemannian connection for the trivial bundle is employed to arrive at the extremal of the cost function. The result is applied to generate a trajectory for the generalized Purcell's swimmer - a low Reynolds number microswimming mechanism.
\end{abstract}

\begin{keyword}
\textit{Dynamic interpolation, principal bundle, Riemannian geometry, robotic locomotion}
\end{keyword}

\end{frontmatter}


\section{Introduction}

Locomotion relates to a variety of movements resulting in transportation from one place to another of biological and robotic systems. Our objective is to generate a trajectory of kinematic locomoting systems which passes through a given set of points. Many of the locomotion systems' configuration space evolves on trivial principal fiber bundle. Some of these are also governed by nonholonomic constraints. Our work is inspired by the tracking problem for systems which are required to pass through certain ordered set of points on the configuration space. This is called as the dynamic interpolation problem and some of the early work on the topic is presented in \cite{jackson1991dynamic}, \cite{lin1982formulation}. The idea of such interpolating curves on non-Euclidean spaces for applications to robotics first appears in \cite{noakes1989cubic}. A variational approach to such problems on Riemannian manifolds and compact connected Lie groups is presented in \cite{crouch1991geometry}. The further extensions to nonholonomic systems using sub-Riemannian geometry is explained in \cite{bloch1993nonholonomic}, \cite{bloch2017variational}, whereas reduction techniques using Lie group symmetries is presented in \cite{altafini2004reduction}.

The present work extends the theory of dynamic interpolation to the principal kinematic form of systems which evolve on trivial principal bundle having a Riemannian structure and admit nonholonomic constraints. Minimization of the squared $L^2-$norm of covariant acceleration is considered as the optimality criterion to obtain the necessary conditions as a differential equation whose solution gives the interpolating curves. As part of the procedure, the expression for the covariant derivative on principal bundle is obtained and symmetry of the structure group is employed to reduce the configuration space which eliminates the group variable. Apart from the objective that the system should pass through the given set of points in the configuration space, the approach presented also gives the expression for the open loop control inputs through the minimizing curve in the base space of the fiber bundle.

\subsection{Organization of the paper}
In the next section we explain the topology of principal bundles that many locomotion systems exhibit and show that this can be treated as a Riemannian product manifold. We also explain the principal kinematic form of systems through the local form of principal connection in this section. Section $3$ explains the preliminaries of Riemannian geometry and also introduces the setting of the problem for the main result of the paper. We then present the dynamic interpolation problem for principal kinematic systems on trivial principal bundles and the ordinary differential equation for the resulting base and group curves is derived. We apply these result to the low Reynolds number generalized Purcell's swimmer in section $4$.

\section{Trivial principal bundles and the Riemannian structure}
The configuration space $Q$ of locomotion systems can usually be written as the product of two manifolds. One is the base manifold $M$ which describes the configuration of the internal shape variables of the mechanism, and the other part is a Lie group $G$ which represents the macro position of the body. Such systems' configuration space admit the topology of a trivial principal fiber bundle\footnote{A \textit{trivial principal fiber bundle} $(Q, \pi, M, G)$ with manifold $M$ as the base space and Lie group $G$ as the structure group is the manifold $Q = M \times G$ and $\pi: Q \to Q/G = M$ as the submersion together with a free left action of $G$ on $Q$ given by left translation in the group variable: $\Phi_h(x,g)=(x, hg)$ for $x \in M$ and $h,g \in G$.} \cite{ostrowski1996geometric}.

\subsection{Riemannian structure on trivial principal bundles}
For a Lie group $G$ with $e$ as its identity, the set of left invariant vector fields on $G$ is isomorphic to its Lie algebra $\mathfrak{g} = T_eG$. By using the left invariance of the Riemannian metric\footnote{A Riemannian metric $\langle \cdot , \cdot \rangle$ on $G$ is said to be left invariant if every left translation is an isometry of $\langle \cdot , \cdot \rangle$.} and defining inner product $\langle \cdot, \cdot \rangle_{\mathfrak{g}}$ on the Lie algebra, we can give $G$ a Riemannian structure as $\langle u , v \rangle_{G} = \langle T_hL_{h^{-1}} u, T_hL_{h^{-1}} v \rangle_{\mathfrak{g}}, \: \forall h \in G \text{ and for } u,v \in T_hG$, \cite{do1992riemannian}.

We consider a trivial principal bundle $(Q, \pi, M, G)$, $x \in M, \: g \in G$ with $\langle \cdot, \cdot \rangle_M$ as the Riemannian metric on $M$ and $\langle \cdot, \cdot \rangle_G$ as the left invariant Riemannian metric on $G$. Using the product structure  $T(Q \times G) \cong TM \times TG \cong TM \times \mathfrak{g}$, $Q$ can be written as a Riemannian product manifold such that for $\dot{x}_1, \dot{x}_2 \in T_xM$ and $\dot{g}_1, \dot{g}_2 \in T_gG$, a metric $\langle \cdot, \cdot \rangle_Q$ on $Q$ is then defined as
\begin{equation}\label{gQ defintion}
\langle (\dot{x}_1, \dot{g}_1),\: (\dot{x}_2, \dot{g}_2) \rangle_{Q} = \langle \dot{x}_1, \dot{x}_2 \rangle_{M} + \langle \dot{g}_1, \dot{g}_2 \rangle_{G} = \langle \dot{x}_1, \dot{x}_2 \rangle_{M} + \langle g_1^{-1}\dot{g}_1, g_2^{-1}\dot{g}_2 \rangle_{\mathfrak{g}}.
\end{equation}
\subsection{Principal kinematic systems}
In our work we consider the principal kinematic form of systems which evolve on trivial principal fiber bundle $Q \ni (x, g)$  and is defined by the kinematic equations given by
\begin{align}
\dot{x} &=u, \label{kinematic_equations1} \\
\xi &= g^{-1}\dot{g} = - \mathbb{A}(x) u  \label{kinematic_equations2}
\end{align}
Here the local connection $\mathbb{A} : TM \mapsto \mathfrak{g}$ maps the points in the tangent bundle $TM$ of the shape space $M$ to the Lie algebra $\mathfrak{g}$ of $G$. We recall that $u \in T_xM$ is the control input which is the shape velocity $\dot{x}\in T_xM$. Also, equations \eqref{kinematic_equations1}, \eqref{kinematic_equations2} together define the nonholonomic constraint $\xi = - \mathbb{A}(x) \dot{x}$. Many robotic and biological locomotion systems are modelled using this paradigm in the kinematic domain \cite{ostrowski1996geometric}, \cite{hatton2013geometric}, \cite{bloch1996nonholonomic}. We also note that although the kinematic form of equations is quite common for many systems, even for dynamical systems, often the simplification in such kinematic form is useful for an analysis for applications such as path planning and trajectory design.

\section{Dynamic interpolation}
In this section we derive the equations of curves as the solution of the dynamic interpolation problem for the principal kinematic form of systems on trivial principle bundles.
\subsection{Preliminaries on the variational principle}
Let M be a smooth Riemannian manifold with the Riemannian metric $g^M$ denoted by $\langle \cdot,\cdot \rangle_{g^M} : T_pM \times T_pM \to \mathbb{R}$ at each point $p \in M$, where $T_pM$ is the tangent space of $M$ at $p$. A Riemannian connection $\overset{M}{\nabla}$ on $M$, is a map that assigns to any two smooth vector fields $X$ and $Y$ on $M$ a new vector field, $\overset{M}{\nabla}_XY$ on $M$. Given vector fields $X$, $Y$ and $Z$ on $M$, the vector field $R_M(X, Y )Z$ is given by 
\begin{equation}\label{RicciTensor}
R_M(X, Y )Z = \overset{M}{\nabla}_X\overset{M}{\nabla}_Y Z - \overset{M}{\nabla}_Y \overset{M}{\nabla}_XZ - \overset{M}{\nabla}_{[X,Y ]}Z,
\end{equation}
where $R_M$ is the curvature tensor on $M$ and $[X, Y ]$ denotes the Lie bracket of the vector fields $X$ and $Y$. For the properties of connection $\overset{M}{\nabla}$ and the curvature, we refer the reader to \cite{do1992riemannian}.

\subsection{Dynamic interpolation for principal kinematic systems}
Consider the set of all $C^1$ piecewise smooth time parameterized curves $q : [T_0, T_N] \to Q=M \times G$ on $(Q, \pi, M, G)$ such that $q(t)=\left(x(t),g(t)\right)$, where $x(t)$ is the curve on $M$ and $g(t)$ is the curve on $G$. Consider proper variations $\underline{q}$ of $q$ from a family of curves passing through fixed points $q_i, \: i = 0,1,...,N$ and fixed times $T_i$ where $T_0 < T_1 < \cdots < T_{N-1} < T_N$, defined by $\underline{q} : (-\epsilon, \epsilon) \times [T_0, T_N ] \to Q, \: (s,t) \mapsto \underline{q}(s,t) = q_s(t)=(x_s(t),g_s(t)), \: s \in (-\epsilon, \epsilon),\: \epsilon \in \mathbb{R}^+$ and 
\begin{align}
& \underline{q}(s,t)|_{s=0} = q(t) = (x(t), g(t) )\qquad \forall t \in [T_0, T_N], \label{condition1} \\
& \underline{q}(s,T_i) = q_i, \qquad \forall s \in (-\epsilon, \epsilon), \quad i = 0,1,...,N. \label{condition2}
\end{align}
We call the curves with fixed $s,\: \underline{q}(s, t)|_{s=const}$ as main curves and those with fixed $t$, $\underline{q}(s, t)|_{t=const}$ as transverse curves. For a fixed $t \in [T_0, T_N]$, the variational vector field $\delta q$ on $Q$ is defined as
\begin{equation}
\delta q(t) = \left( \delta x(t), \delta g(t) \right) = \evalat[\Big]{\frac{d}{ds}}{s=0}\:\underline{q}(s,t).
\end{equation}
The left invariance of the vector fields $\dot{g}$ and $\delta g$ allows us to write them in terms of the pullback to the group identity. We call $\mathfrak{t}(t)$ and $\mathfrak{s}(t)$ as the $\mathfrak{g}$-valued infinitesimal variations corresponding to $\dot{g}(t)$ and $\delta g(t)$, respectively. The tangential and variational vector fields $\dot{q}(t)$ and $\delta q(t)$ can now be written in terms of these $\mathfrak{g}-$valued vector fields using the tangent lift of the group action $T_e\phi: G \times \mathfrak{g} \to \mathfrak{g}$ as
\begin{align}
\dot{q}(t) &= (\dot{x}(t), \dot{g}(t)) = T_e\phi_{g(t)}(\dot{x}(t), \mathfrak{t}(t))= (\dot{x}(t), g(t)\mathfrak{t}(t) ) \text{\qquad \qquad \, (Tangential vector field)} \label{Tangential vector field}\\
\delta q(t) &= (\dot{x}(t), \delta g(t)) = T_e\phi_{g(t)}(\dot{x}(t), \mathfrak{s}(t)) = (\dot{x}(t),g(t)\mathfrak{s}(t)) \text{\qquad \:\:\:\:\:\: (Variational vector field)} \label{Variational vector field}
\end{align}
We note that since the variation is proper, the variational vector field vanishes at all the interpolation points, i.e. $\delta q(T_i) = (\delta x(T_i),\: g^{-1}(T_i)\delta g(T_i)) = (\delta x(T_i),\: \mathfrak{s}(T_i)) = (0, 0)$ for $i = 0, ..., N$. For compactness of notation, in the subsequent parts of the paper we write $q_s(t),\:x_s(t),\:g_s(t)$ as $q_s,\:x_s,\:g_s,$ respectively. We now state the main result of the paper on the dynamic interpolation of principal kinematic systems.
\begin{theorem}
A necessary condition for $q(t) = (x(t), g(t)), \: t \in [T_0, T_N]$ to be an extremal of the cost function $\mathcal{J}(q(t)) = \frac{1}{2} \int_{T_0}^{T_N} \langle \overset{Q}{\nabla}_{\dot{q}(t)} \dot{q}(t), \overset{Q}{\nabla}_{\dot{q}(t)} \dot{q}(t) \rangle_Q dt$ for the principal kinematic system defined by \eqref{kinematic_equations1} and \eqref{kinematic_equations2} over the class of $C^1$ paths $q_s(t)$ on $Q$ satisfying 
\begin{equation}\label{BCs}
q_s(T_i) = q_{T_i} = (x_{T_i},\: g_{T_i})\:\:\: \forall i = 0, ... , N\: \text{ and }\: \dot{q}_s(T_0) = (\dot{x}_{T_0}, \dot{g}_{T_0}),\: \: \dot{q}_s(T_N) = (\dot{x}_{T_N}, \dot{g}_{T_N}) \quad \forall s \in (-\epsilon, \epsilon)
\end{equation}
is that for every $t\in [T_{i-1},T_i], \: i = 1,...N$, the following equations hold -
\begin{align}
& \overset{M}{\nabla^3}_{\dot{x}}\dot{x}+R_M \big( \overset{M}{\nabla}_{\dot{x}}\dot{x},\dot{x}\big) \dot{x} - \mathbb{A}^T(x) \Big( \dddot{\mathfrak{t}}+3\overset{G}{\nabla}_{\mathfrak{t}}\ddot{\mathfrak{t}} + 3 \overset{G}{\nabla}_{\dot{\mathfrak{t}}}\dot{\mathfrak{t}} + \overset{G}{\nabla}_{\ddot{\mathfrak{t}}}\mathfrak{t} + 3\overset{G}{\nabla}\!\!\phantom{.}^{2}_{\mathfrak{t}}\dot{\mathfrak{t}} + 2\overset{G}{\nabla}_{\mathfrak{t}} \overset{G}{\nabla}_{\dot{\mathfrak{t}}}\mathfrak{t} + \nonumber \\ 
& \qquad \qquad \qquad \qquad \qquad \qquad \quad  \overset{G}{\nabla}_{\dot{\mathfrak{t}}} \overset{G}{\nabla}_{\mathfrak{t}} \mathfrak{t} + \overset{G}{\nabla}\!\!\phantom{.}^{3}_{\mathfrak{t}}\mathfrak{t} + R_G(\dot{\mathfrak{t}},\mathfrak{t})\mathfrak{t}+R_G(\overset{G}{\nabla}_\mathfrak{t} \mathfrak{t}, \mathfrak{t})\mathfrak{t} \Big) = 0, \label{FinalEquation} \\
&\mathfrak{t} = - \mathbb{A}(x) \dot{x}. \label{reconstructionEquation}
\end{align}
\end{theorem}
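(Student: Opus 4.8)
\noindent\emph{Proof strategy.}\quad The plan is to first obtain the unconstrained Euler--Lagrange equation for $\mathcal{J}$ on the product manifold $Q$, then to reduce the $G$-part by left-invariance, and finally to enforce the nonholonomic constraint both on the extremal curve and on the admissible variations.

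Because $\langle\cdot,\cdot\rangle_Q$ in \eqref{gQ defintion} is a Riemannian product metric, the Levi-Civita connection and the curvature tensor split blockwise: for fields $(X,\xi)$ along $q=(x,g)$ one has $\overset{Q}{\nabla}_{(\dot{x},\dot{g})}(X,\xi)=\big(\overset{M}{\nabla}_{\dot{x}}X,\ \overset{G}{\nabla}_{\dot{g}}\xi\big)$ and $R_Q\big((X_1,\xi_1),(X_2,\xi_2)\big)(X_3,\xi_3)=\big(R_M(X_1,X_2)X_3,\ R_G(\xi_1,\xi_2)\xi_3\big)$, so that $\mathcal{J}(q)=\mathcal{J}_M(x)+\mathcal{J}_G(g)$ with $\mathcal{J}_M(x)=\tfrac12\int_{T_0}^{T_N}\langle\overset{M}{\nabla}_{\dot{x}}\dot{x},\overset{M}{\nabla}_{\dot{x}}\dot{x}\rangle_M\,dt$ and $\mathcal{J}_G$ defined analogously on $G$. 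First I would differentiate $\mathcal{J}_M(x_s)$ and $\mathcal{J}_G(g_s)$ at $s=0$ using the symmetry lemma $\tfrac{D}{ds}\partial_t q=\tfrac{D}{dt}\partial_s q$, the curvature identity $\tfrac{D}{ds}\tfrac{D}{dt}V=\tfrac{D}{dt}\tfrac{D}{ds}V+R(\partial_s q,\partial_t q)V$ (applied within each factor), compatibility of $\overset{M}{\nabla}$ and $\overset{G}{\nabla}$ with the metrics, and the pair/skew symmetries of the curvature tensor to shift everything onto the variation field. Integrating by parts three times on each subinterval $[T_{i-1},T_i]$, the boundary terms vanish because the variation is proper ($\delta q(T_i)=0$) and the covariant velocities are fixed at $T_0$ and $T_N$; what remains is $\tfrac{d}{ds}\big|_{0}\mathcal{J}_M=\int_{T_0}^{T_N}\langle\delta x,\ \overset{M}{\nabla^3}_{\dot{x}}\dot{x}+R_M(\overset{M}{\nabla}_{\dot{x}}\dot{x},\dot{x})\dot{x}\rangle_M\,dt$ and $\tfrac{d}{ds}\big|_{0}\mathcal{J}_G=\int_{T_0}^{T_N}\langle\delta g,\ \overset{G}{\nabla^3}_{\dot{g}}\dot{g}+R_G(\overset{G}{\nabla}_{\dot{g}}\dot{g},\dot{g})\dot{g}\rangle_G\,dt$, the Riemannian-cubic operator on each factor.

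Next I would reduce the $G$-term by symmetry. Writing $\dot{g}=g\mathfrak{t}$ and $\delta g=g\mathfrak{s}$ as in \eqref{Tangential vector field}--\eqref{Variational vector field} and using left-invariance of $\langle\cdot,\cdot\rangle_G$, the covariant derivative along $g(t)$ of a left-translated field $g(t)\eta(t)$ is $g(t)\big(\dot{\eta}+\overset{G}{\nabla}_{\mathfrak{t}}\eta\big)$, where $\overset{G}{\nabla}$ on the right denotes the induced bilinear operator on $\mathfrak{g}$; the curvature $R_G$ of left-invariant fields is likewise the left translate of the corresponding operator on $\mathfrak{g}$. Applying the first formula iteratively to $\overset{G}{\nabla^3}_{\dot{g}}\dot{g}$ and collecting terms via the product rule (a lengthy but routine computation) shows that the body representation of $\overset{G}{\nabla^3}_{\dot{g}}\dot{g}+R_G(\overset{G}{\nabla}_{\dot{g}}\dot{g},\dot{g})\dot{g}$ is precisely the bracketed expression multiplying $\mathbb{A}^T(x)$ in \eqref{FinalEquation}; denote it $\Xi(\mathfrak{t})$. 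Pulling $\langle\delta g,\cdot\rangle_G$ back to the identity then gives $\tfrac{d}{ds}\big|_{0}\mathcal{J}_G=\int_{T_0}^{T_N}\langle\mathfrak{s},\Xi(\mathfrak{t})\rangle_{\mathfrak{g}}\,dt$.

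Finally I would impose the constraint. Admissibility of the extremal is \eqref{kinematic_equations1}--\eqref{kinematic_equations2}, which in body form is $\mathfrak{t}=-\mathbb{A}(x)\dot{x}$, i.e. \eqref{reconstructionEquation}; the corresponding infinitesimal variations lie in the constraint distribution, so $\mathfrak{s}=-\mathbb{A}(x)\,\delta x$. Defining the metric adjoint $\mathbb{A}^T(x):\mathfrak{g}\to T_xM$ by $\langle\mathbb{A}(x)v,\zeta\rangle_{\mathfrak{g}}=\langle v,\mathbb{A}^T(x)\zeta\rangle_M$ and substituting $\mathfrak{s}=-\mathbb{A}(x)\delta x$ into $\tfrac{d}{ds}\big|_{0}\mathcal{J}=\tfrac{d}{ds}\big|_{0}\mathcal{J}_M+\tfrac{d}{ds}\big|_{0}\mathcal{J}_G$ yields $\tfrac{d}{ds}\big|_{0}\mathcal{J}=\int_{T_0}^{T_N}\big\langle\delta x,\ \overset{M}{\nabla^3}_{\dot{x}}\dot{x}+R_M(\overset{M}{\nabla}_{\dot{x}}\dot{x},\dot{x})\dot{x}-\mathbb{A}^T(x)\,\Xi(\mathfrak{t})\big\rangle_M\,dt$. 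Since on each subinterval $\delta x$ is an arbitrary piecewise-smooth field vanishing at the interpolation instants (with vanishing covariant velocity at $T_0$ and $T_N$), the fundamental lemma of the calculus of variations forces the integrand to vanish on each $[T_{i-1},T_i]$, which is \eqref{FinalEquation}; together with \eqref{reconstructionEquation} this is the claimed necessary condition. The hard part will be the reduction step: carrying out the iterated body-frame expansion of the third covariant derivative without dropping terms or introducing sign errors, and verifying that the two curvature contributions $R_G(\dot{\mathfrak{t}},\mathfrak{t})\mathfrak{t}$ and $R_G(\overset{G}{\nabla}_{\mathfrak{t}}\mathfrak{t},\mathfrak{t})\mathfrak{t}$ emerge with the coefficients shown; one must also keep track of the natural corner conditions at the interior nodes $T_1,\dots,T_{N-1}$, although only the per-subinterval ODE is asserted here.
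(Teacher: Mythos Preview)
Your proposal is correct and follows essentially the same route as the paper: split $\mathcal{J}$ via the product connection (the paper's Appendix~B), compute the first variation on each factor, reduce the $G$-part to $\mathfrak{g}$ by left-invariance, impose $\mathfrak{s}=-\mathbb{A}(x)\,\delta x$, and move $\mathbb{A}$ across the pairing as $\mathbb{A}^T$. The only cosmetic difference is that the paper quotes the $M$-variation and the reduced $G$-variation as Lemmas~1 and~2 from \cite{crouch1991geometry} and \cite{altafini2004reduction}, whereas you sketch their derivation (first obtaining the unreduced Riemannian-cubic operator on $G$ and then left-trivializing); the computation you flag as ``hard'' is exactly Altafini's Lemma~2.
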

\begin{proof}
We first state a few lemmas which will be used in the derivation of the result.
\begin{lemma}{(Variation on a Riemannian manifold \cite{crouch1991geometry})}\label{Lemma1}
For a smooth curve $x(t) \in M$, $t \in  [T_0, T_N ]$, such that $x(T_i) = x_i, \: \: i = 0, ... N,\: \dot{x}|_{t=T_0} = v_0$ and $\dot{x}|_{t=T_N} = v_N$, $\delta x(t)$ as the variational vector field corresponding to variations $x_s(t)$, the variation of $J(x(t)) = \frac{1}{2} \int_{T_0}^{T_N} \langle \overset{M}{\nabla}_{\dot{x}} \dot{x}, \overset{M}{\nabla}_{\dot{x}} \dot{x} \rangle dt$ is
\begin{align}\label{DynInterpRiemManifold}
\frac{d}{ds} \left( \frac{1}{2} \int_{T_0}^{T_N} \langle \overset{M}{\nabla}_{\dot{x}_s} \dot{x}_s, \overset{M}{\nabla}_{\dot{x}_s} \dot{x}_s \rangle dt \right) = &\int_{T_0}^{T_N} \Big \langle
\overset{G}{\nabla}\!\!\phantom{.}^{3}_{\dot{x}_s} \dot{x}_s + R(\delta x_s, \overset{M}{\nabla}_{\dot{x}_s}\dot{x}_s)\dot{x}_s, \: \overset{M}{\nabla}_{\dot{x}_s} \dot{x}_s \Big \rangle+ \nonumber \\
& \qquad \sum_{i=1}^{N} \left[ \left\langle \overset{M}{\nabla}_{\delta x_s}\dot{x}_s, \overset{M}{\nabla}_{\delta x_s}\dot{x}_s \right\rangle - \Big\langle \overset{M}{\nabla}_{\delta x_s}\dot{x}_s, \delta x_s \Big\rangle \right]_{T_{i-1}^+}^{T_{i}^-}.
\end{align}
\end{lemma}
\begin{lemma}{(Variation on a Lie group \cite{altafini2004reduction})}\label{DynInterpSDPLieGroup}
For a smooth curve $g(t) \in G$, a Lie group, with $\mathfrak{g}-$valued tangent vector field $g^{-1}(t) \dot{g}(t) = \mathfrak{t} \in \mathfrak{g}$ and $\mathfrak{s}$ as the $\mathfrak{g}-$valued variational vector field corresponding to proper variations of $g(t)$. The variation of $J(g_s(t)) = \frac{1}{2} \int_{T_0}^{T_N} \langle \overset{G}{\nabla}_{\dot{g}_s(t)} \dot{g}_s(t), \overset{G}{\nabla}_{\dot{g}_s(t)} \dot{g}_s(t) \rangle dt$ is given as
\begin{align}
\frac{d}{ds} \left( \frac{1}{2} \int_{T_0}^{T_N} \langle \overset{G}{\nabla}_{\dot{g}_s(t)} \dot{g}_s(t), \overset{G}{\nabla}_{\dot{g}_s(t)} \dot{g}_s(t) \rangle dt \right) &= \int_{T_0}^{T_N} \Big \langle
\dddot{\mathfrak{t}}+3\overset{G}{\nabla}_{\mathfrak{t}}\ddot{\mathfrak{t}} + 3 \overset{G}{\nabla}_{\dot{\mathfrak{t}}}\dot{\mathfrak{t}} + \overset{G}{\nabla}_{\ddot{\mathfrak{t}}}\mathfrak{t} + 3\overset{G}{\nabla}\!\!\phantom{.}^{2}_{\mathfrak{t}}\dot{\mathfrak{t}} + 2\overset{G}{\nabla}_{\mathfrak{t}} \overset{G}{\nabla}_{\dot{\mathfrak{t}}}\mathfrak{t} + \overset{G}{\nabla}_{\dot{\mathfrak{t}}} \overset{G}{\nabla}_{\mathfrak{t}} \mathfrak{t} + \nonumber \\ 
& \qquad\qquad \overset{G}{\nabla}\!\!\phantom{.}^{3}_{\mathfrak{t}}\mathfrak{t} + R_G(\dot{\mathfrak{t}},\mathfrak{t})\mathfrak{t}+R_G(\overset{G}{\nabla}_\mathfrak{t} \mathfrak{t}, \mathfrak{t})\mathfrak{t},\: \mathfrak{s} \Big \rangle dt \label{DynInterpLieGroup}
\end{align}
\end{lemma}
We now present the proof of theorem 1. The variation of the cost function $\mathcal{J}$ can be written as
\begin{align*}
\frac{d}{ds} \mathcal{J}(q_s(t)) &= \int_{T_0}^{T_N} \Big\langle \overset{Q}{\nabla}_{\delta q_s} \overset{Q}{\nabla}_{\dot{q}_s} \dot{q}_s, \overset{Q}{\nabla}_{\dot{q}_s} \dot{q}_s \Big\rangle_{Q} dt \\
&= \int_{T_0}^{T_N} \Big\langle \overset{Q}{\nabla}_{(\delta x_s, \delta g_s)} \overset{Q}{\nabla}_{(\dot{x}_s, \dot{g}_s)} (\dot{x}_s, \dot{g}_s), \overset{Q}{\nabla}_{(\dot{x}_s, \dot{g}_s)} (\dot{x}_s, \dot{g}_s) \Big\rangle_{Q} dt \\
&= \int_{T_0}^{T_N} \Big\langle \big( \overset{M}{\nabla}_{\delta x_s} \overset{M}{\nabla}_{\dot{x}_s} \dot{x}_s, \overset{G}{\nabla}_{\delta g_s} \overset{G}{\nabla}_{\dot{g}_s} \dot{g}_s \big), \big( \overset{M}{\nabla}_{\dot{x}_s} \dot{x}_s, \overset{G}{\nabla}_{\dot{g}_s} \dot{g}_s \big) \Big\rangle_{Q} dt \tag*{using the result in \ref{appendixB} }\\
&= \int_{T_0}^{T_N} \left[\Big \langle \overset{M}{\nabla}_{\delta x} \overset{M}{\nabla}_{\dot{x}_s} \dot{x}_s, \overset{M}{\nabla}_{\dot{x}_s} \dot{x}_s \Big\rangle_{M} + \Big\langle \overset{G}{\nabla}_{\delta g_s} \overset{G}{\nabla}_{\dot{g}_s} \dot{g}_s, \overset{G}{\nabla}_{\dot{g}_s} \dot{g}_s \Big \rangle_{G} \right] dt \tag*{from the metric defined in \eqref{gQ defintion} }
\end{align*}
Using lemma 1 and 2 in expanding the first and second terms respectively, and evaluating at $s=0$, we get
\begin{align*}
&\evalat[\Big]{\frac{d}{ds}}{s=0}  \mathcal{J}(q_s(t)) = \int_{T_0}^{T_N} \Bigg [ \Big\langle \overset{M}{\nabla^3}_{\dot{x}}\dot{x}+R_M \Big( \overset{M}{\nabla}_{\dot{x}}\dot{x},\dot{x}\Big) \dot{x}, \delta x_s \Big\rangle_M + \Big\langle \dddot{\mathfrak{t}}+ 3\overset{G}{\nabla}_{\mathfrak{t}}\ddot{\mathfrak{t}} + 3 \overset{G}{\nabla}_{\dot{\mathfrak{t}}}\dot{\mathfrak{t}} + \overset{G}{\nabla}_{\ddot{\mathfrak{t}}}\mathfrak{t}  + 3\overset{G}{\nabla}\!\!\phantom{.}^{2}_{\mathfrak{t}}\dot{\mathfrak{t}} + 2\overset{G}{\nabla}_{\mathfrak{t}} \overset{G}{\nabla}_{\dot{\mathfrak{t}}}\mathfrak{t} + \\
& \quad \overset{G}{\nabla}_{\dot{\mathfrak{t}}} \overset{G}{\nabla}_{\mathfrak{t}} \mathfrak{t} + \overset{G}{\nabla}\!\!\phantom{.}^{3}_{\mathfrak{t}}\mathfrak{t} + R_G(\dot{\mathfrak{t}},\mathfrak{t})\mathfrak{t}+R_G(\overset{G}{\nabla}_\mathfrak{t} \mathfrak{t}, \mathfrak{t})\mathfrak{t} , \mathfrak{s} \Big\rangle_{\mathfrak{g}} \Bigg ]_{s=0} dt + \sum_{i=1}^{N}\left[ \left\langle \overset{M}{\nabla}_{\delta x_s}\dot{x}_s, \overset{M}{\nabla}_{\delta x_s}\dot{x}_s \right\rangle_{s=0} - \Big\langle \overset{M}{\nabla}_{\delta x_s}\dot{x}_s, \delta x_s \Big\rangle_{s=0} \right]_{T_{i-1}^+}^{T_i^-}
\end{align*}
Since the variations are proper, $\delta x_s(t)|_{t=T_i} = 0$ and $\mathfrak{s}|_{t=T_i} = 0$. Hence, we get
\begin{align*}
\evalat[\Big]{\frac{d}{ds}}{s=0}  \mathcal{J}(q_s(t)) &= \int_{T_0}^{T_N} \Bigg [ \left\langle \overset{M}{\nabla^3}_{\dot{x}}\dot{x}+R_M \left( \overset{M}{\nabla}_{\dot{x}}\dot{x},\dot{x}\right) \dot{x}, \delta x_s \right\rangle_M  dt + \Big\langle \dddot{\mathfrak{t}}+3\overset{G}{\nabla}_{\mathfrak{t}}\ddot{\mathfrak{t}} + 3 \overset{G}{\nabla}_{\dot{\mathfrak{t}}}\dot{\mathfrak{t}} + \overset{G}{\nabla}_{\ddot{\mathfrak{t}}}\mathfrak{t} + 3\overset{G}{\nabla}\!\!\phantom{.}^{2}_{\mathfrak{t}}\dot{\mathfrak{t}} +  \\
& \qquad\qquad   2\overset{G}{\nabla}_{\mathfrak{t}} \overset{G}{\nabla}_{\dot{\mathfrak{t}}}\mathfrak{t} + \overset{G}{\nabla}_{\dot{\mathfrak{t}}} \overset{G}{\nabla}_{\mathfrak{t}} \mathfrak{t} + \overset{G}{\nabla}\!\!\phantom{.}^{3}_{\mathfrak{t}}\mathfrak{t} + R_G(\dot{\mathfrak{t}},\mathfrak{t})\mathfrak{t}+R_G(\overset{G}{\nabla}_\mathfrak{t} \mathfrak{t}, \mathfrak{t})\mathfrak{t} , \mathfrak{s} \Big\rangle_{\mathfrak{g}} \Bigg ]_{s=0} dt
\end{align*}
For a principal kinematic system, the nonholonomic constraints are defined by $\xi = \mathfrak{t} = - \mathbb{A}(x) \dot{x}$. To obtain a curve which satisfies these nonholonomic constraints using the variational approach, the variations $\delta x$ and $\mathfrak{s}$ should also satisfy the constraints $\mathfrak{s} = - \mathbb{A}(x) \delta x$. The variational equations thus become
\begin{align}
\evalat[\Big]{\frac{d}{ds}}{s=0}  \mathcal{J}(q_s(t)) &= \int_{T_0}^{T_N}\Bigg [ \left\langle \overset{M}{\nabla^3}_{\dot{x}}\dot{x}+R_M \big( \overset{M}{\nabla}_{\dot{x}}\dot{x},\dot{x}\big) \dot{x}, \delta x_s \right\rangle_M  dt + \Big\langle \dddot{\mathfrak{t}}+3\overset{G}{\nabla}_{\mathfrak{t}}\ddot{\mathfrak{t}} + 3 \overset{G}{\nabla}_{\dot{\mathfrak{t}}}\dot{\mathfrak{t}} + \overset{G}{\nabla}_{\ddot{\mathfrak{t}}}\mathfrak{t} + 3\overset{G}{\nabla}\!\!\phantom{.}^{2}_{\mathfrak{t}}\dot{\mathfrak{t}} + \nonumber \\
& \qquad \qquad \:\:  2\overset{G}{\nabla}_{\mathfrak{t}} \overset{G}{\nabla}_{\dot{\mathfrak{t}}}\mathfrak{t} + \overset{G}{\nabla}_{\dot{\mathfrak{t}}} \overset{G}{\nabla}_{\mathfrak{t}} \mathfrak{t} + \overset{G}{\nabla}\!\!\phantom{.}^{3}_{\mathfrak{t}}\mathfrak{t} + R_G(\dot{\mathfrak{t}},\mathfrak{t})\mathfrak{t}+R_G(\overset{G}{\nabla}_\mathfrak{t} \mathfrak{t}, \mathfrak{t})\mathfrak{t} , - \mathbb{A}(x) \delta x_s \Big\rangle_{\mathfrak{g}}\Bigg ]_{s=0} dt \\
&= \int_{T_0}^{T_N} \Big\langle \overset{M}{\nabla^3}_{\dot{x}}\dot{x}+R_M \big( \overset{M}{\nabla}_{\dot{x}}\dot{x},\dot{x} \big) \dot{x} - \mathbb{A}^T(x) \Big(\dddot{\mathfrak{t}}+3\overset{G}{\nabla}_{\mathfrak{t}}\ddot{\mathfrak{t}} + 3 \overset{G}{\nabla}_{\dot{\mathfrak{t}}}\dot{\mathfrak{t}} + \overset{G}{\nabla}_{\ddot{\mathfrak{t}}} \mathfrak{t} + 3\overset{G}{\nabla}\!\!\phantom{.}^{2}_{\mathfrak{t}}\dot{\mathfrak{t}} + \nonumber \\
& \qquad \qquad 2\overset{G}{\nabla}_{\mathfrak{t}} \overset{G}{\nabla}_{\dot{\mathfrak{t}}}\mathfrak{t} + \overset{G}{\nabla}_{\dot{\mathfrak{t}}} \overset{G}{\nabla}_{\mathfrak{t}} \mathfrak{t} + \overset{G}{\nabla}\!\!\phantom{.}^{3}_{\mathfrak{t}}\mathfrak{t} + R_G(\dot{\mathfrak{t}},\mathfrak{t})\mathfrak{t}+R_G(\overset{G}{\nabla}_\mathfrak{t} \mathfrak{t}, \mathfrak{t})\mathfrak{t}\Big) , \delta x_s \Big\rangle_M dt
\end{align}
Since $\evalat[\Big]{\frac{d}{ds}}{s=0}  \mathcal{J}(q_s(t)) = 0$ for all the admissible variations $\delta q_s = (\delta x_s, \mathfrak{s})$ which satisfy the nonholonomic constraints \eqref{reconstructionEquation}, the result follows.
\end{proof}

\textbf{Remark 3.1:} Consider $n_M$ and $n_G$ as the dimensions of $M$ and $G$, respectively. The equations \eqref{FinalEquation} and \eqref{reconstructionEquation} along with  the group reconstruction equation $\mathfrak{t}(t) = g^{-1}(t)\dot{g}(t)$ give a set of $(n_M+n_G)-$dimensional fourth order equation in $x$ and $g$. These can be solved using the conditions \eqref{BCs} to get the interpolating curve $q(t)$ for the $N$ intervals $[T_i, T_{i+1}]$ using the $4(n_M+n_G)N$ boundary conditions arising from
\begin{itemize}
\item The $(4n_M+4n_G)$ initial and terminal boundary conditions from \eqref{BCs} and
\item The interpolation conditions $q(T_i)^-=q(T_i)^+=q_{T_i}$ along with the $C^2-$smoothness conditions \cite{crouch1991geometry} $\dot{q}(T_i)^-=\dot{q}(T_i)^+, \: \ddot{q}(T_i)^-=\ddot{q}(T_i)^+, \: i = 1,...,N-1$ giving $(4n_M+4n_G)(N-1)$ conditions.
\end{itemize}
We also note that the procedure gives us the value of the control input $u = \dot{x}$ to trace the $C^2$ interpolating trajectory for the system from the given initial condition.
\section{Example : The generalized Purcell's swimmer}
\begin{wrapfigure}{r}{6cm}
\includegraphics[width=6cm]{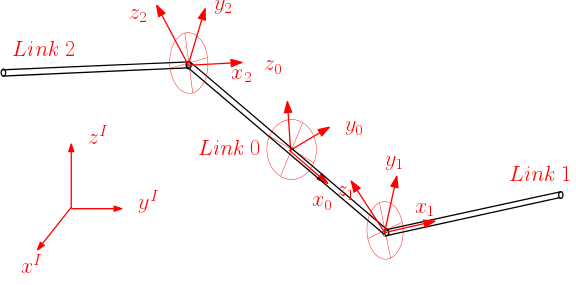}
\caption{The generalized Purcell's swimmer}
\label{Swimmer}
\end{wrapfigure}In this section we apply the dynamic interpolation results to a representative problem of great interest in the control community - the generalized Purcell's swimmer which is a 3-link swimming mechanism moving in a fluid at low Reynolds number conditions \cite{kadam2018geometry}, see Fig.\ref{Swimmer}. The configuration space is a trivial principal bundle with the base space $SO(3) \times SO(3)$ representing the orientation of the coordinate frames $\{x_1,\: y_1,\: z_1 \},\: \{x_2,\: y_2,\: z_2 \} $ associated with the outer links $1$ and $2$ with respect to the base link frame $\{x_0,\: y_0,\: z_0 \}$ through shape variables $R_1, R_2 \in SO(3)$, respectively. The macro-position of the swimmer $g \in SE(3)$ is defined by location of the midpoint of the middle link and its orientation with respect to the inertial frame. We denote by $\xi \in \mathfrak{g}$ the body frame velocity of the base link such that $\xi_R$ and $\xi_T$ are the rotational and translational components of $\xi,$ respectively. $\hat{\omega}_1, \: \hat{\omega}_2 \in \mathfrak{so}(3)$ are the control inputs and represent the shape velocities of the outer links with respect to the base link. The swimmer motion can be written in a principal kinematic form using the Cox theory \cite{kadam2018geometry} with $\mathbb{A}:SO(3)\times SO(3) \times \mathfrak{so}(3) \times \mathfrak{so}(3) \to \mathfrak{se}(3)$.
\subsection{Dynamic interpolation equations for the Purcell's swimmer}
Using the expressions for the covariant derivative and curvature on connected and compact Lie groups\footnote{For a connected compact Lie group if $\nabla$ is the Riemannian connection associated with a left invariant metric and $X, Y$ and $Z$ are left invariant vector fields on the Lie group, then $\nabla_XY = \frac{1}{2}[X.Y]$ and the curvature $R(X,Y) Z = -\frac{1}{4}[[X,Y],Z]$, \cite{milnormorse}.} and adaptation of theorem in \ref{appendixB} to just the base space $M = SO(3) \times SO(3)$, which is a Riemannian product manifold, we get an expression for the first term in \eqref{FinalEquation} for the swimmer as 
\begin{equation}\label{CovDer_Curv_on_base}
\overset{M}{\nabla^3}_{\dot{x}}\dot{x}+R \big( \overset{M}{\nabla}_{\dot{x}}\dot{x},\dot{x}\big) \dot{x} = \left[ \dddot{\omega}_1 - \ddot{\omega}_1 \times \omega_1, \: \dddot{\omega}_2 - \ddot{\omega}_2 \times \omega_2 \right]^T
\end{equation}
Furthermore, for $\xi, \: \eta \in \mathfrak{se}(3)$ with metric tensor $\mathbb{I}$ on $\mathfrak{se}(3)$ as the identity matrix, the relation for covariant derivative of $\xi$ with respect to $•$ is $\eta$ is given as follows, \cite{altafini2004reduction}
\begin{equation}\label{AltafiniCovDerOnLieAlgebra}
\overset{G}\nabla_{\xi}\eta = \frac{1}{2} \left( [\xi,\eta] - \mathbb{I}^{-1} ( ad^*_{\xi} \mathbb{I} \eta + ad^*_{\eta} \mathbb{I} \xi \right) = \begin{bmatrix}
\frac{1}{2} \hat{\omega}_{\xi} \omega_{\eta} \\
\hat{\omega}_{\xi} v_{\eta}
\end{bmatrix}
\end{equation}
For the generalized Purcell's swimmer, using equations \eqref{CovDer_Curv_on_base} and \eqref{AltafiniCovDerOnLieAlgebra}, with $\mathbb{I}$ as the identity, we get the explicit expression for the terms in \eqref{FinalEquation} to get final set of equations as

\begin{align}
& \begin{bmatrix}
\dddot{\omega}_1 - \ddot{\omega}_1 \times \omega_1\\
\dddot{\omega}_2 - \ddot{\omega}_2 \times \omega_2
\end{bmatrix} - \mathbb{A}^T(R_1, R_2) \begin{bmatrix}
\dddot{\xi}_R + \frac{3}{2} \xi_R \times \ddot{\xi}_R + \frac{1}{2} \xi_R \times \xi_R \times \dot{\xi}_R + \frac{1}{2} \xi_R \times \dot{\xi}_R \times \xi_R + ( \dot{\xi}_R \times \xi_R ) \times \xi_R \\
\big( \dddot{\xi}_T + 3 \xi_R \times \ddot{\xi}_T + 3 \dot{\xi}_R \times \dot{\xi}_T + \ddot{\xi}_R \times \xi_T + \frac{5}{2} \xi_R \times \xi_R \times \dot{\xi}_T  + ... \\ 
\qquad ...+\frac{7}{2} \xi_R \times \dot{\xi}_R \times \xi_T + 2 \dot{\xi}_R \times \xi_R \times \xi_T + \frac{1}{2} \xi_R \times \xi_R \times \xi_R \times \xi_T \big)
\end{bmatrix} = 0 \label{Purcell_final_eqn1} \\
& \begin{bmatrix}
\xi_R \\
\xi_T
\end{bmatrix} = -\mathbb{A}(R_1, R_2) \begin{bmatrix}
\omega_1 \\
\omega_2
\end{bmatrix} \label{Purcell_final_eqn2}
\end{align}
Along with the smoothness $C^2$ smoothness conditions and the boundary conditions for $i \in \{0, 1, ... , N \}$ as
\begin{align*}
R_1(T_i)=\bar{R}_{1, T_i}, \:\:\:\: R_2(T_i)=\bar{R}_{2, T_i}, \:\: g(T_i)=\bar{g}_{T_i}, \tag*{(Conditions on the shape and group position)} \\
\omega_1(T_i)=\bar{\omega}_{1, T_i}, \:\:\:\: \omega_2(T_i)=\bar{\omega}_{2, T_i}, \:\:\:\: \xi(T_i)=\bar{\xi}_{T_i} \tag*{(Conditions on shape and group velocities)}
\end{align*}
Lastly, as part of future work, we hope to apply the results in this paper to other locomotion systems and formation control problem consisting of agents whose system model is of the principal kinematic form.
\appendix
\section{}\label{appendixA}
\begin{theorem}\label{Levi-Civita connection on immersed submanifold}
Let $f : M^n \to \overline{M}^{n+k}$ be an immersion of differentiable manifold $M$ into a Riemmanian manifold $\overline{M}$. Assume that $M$ has an induced Riemannian metric. Let for $p \in M, \: U \subset M$ be an open neighbourhood of $p$ such that $f(U) \subset \overline{M}$ is an immersed submanifold of $\overline{M}$. Let $X, Y \in \mathfrak{X} (f(U))$ and $\overline{X}, \overline{Y}$ be their extensions\footnote{Let $\overline{M}$ be a Riemannian manifold and $M$ be its embedded submanifold. Let $X$ be a vector field on $M$, $X \in \mathfrak{X}(M)$. Then $\forall p \in \overline{M}$ and a vector field $\overline{X} \in \mathfrak{X}(\overline{U}$ such that $\overline{X}_p = X_p$/ $\overline{X}$ is called as the (local) extension of $X$.} on $\overline{M}$. We define $\nabla_XY(p) = (\overline{\nabla}_{\overline{X}}\overline{Y})^{T_M}$, where $\overline{\nabla}$ is a Riemannian connection on $\overline{M}$ and $(\overline{\nabla}_{\overline{X}}Y)^{T_M}$ is the tangential component of $\overline{\nabla}_{\overline{X}}Y$ to the manifold $M$. The claim is that $\nabla$ is the Riemannian connection on $M$.
\end{theorem}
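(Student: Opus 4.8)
The plan is to invoke the fundamental theorem of Riemannian geometry: since the Levi-Civita connection on $M$ is the unique affine connection that is both symmetric (torsion-free) and compatible with the induced metric, it suffices to verify that the proposed map $\nabla_X Y := (\overline{\nabla}_{\overline X}\overline Y)^{T_M}$ is a well-defined affine connection on $M$ possessing these two properties. First I would check well-definedness: the value $(\overline{\nabla}_{\overline X}\overline Y)(p)$ depends on $\overline X$ only through $\overline X_p = X_p$, and on $\overline Y$ only through its values along a curve tangent to $X_p$; since such a curve can be taken inside $f(U)$, the result is independent of the chosen extensions, so $\nabla_X Y$ is intrinsic to $M$. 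The connection axioms — $C^\infty(M)$-linearity and additivity in the $X$ slot, additivity in the $Y$ slot, and the Leibniz rule $\nabla_X(hY) = X(h)\,Y + h\,\nabla_X Y$ for $h \in C^\infty(M)$ — are inherited from the corresponding properties of $\overline{\nabla}$ after taking tangential components, because tangential projection is $C^\infty(M)$-linear and the term $X(h)\overline Y$ produced by the Leibniz rule for $\overline\nabla$ is already tangent to $M$.

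Next I would verify symmetry. For the extensions one can arrange that $[\overline X,\overline Y]$ restricts along $f(U)$ to $[X,Y]$, which is a vector field tangent to $M$. Then
\begin{equation*}
\nabla_X Y - \nabla_Y X = (\overline{\nabla}_{\overline X}\overline Y - \overline{\nabla}_{\overline Y}\overline X)^{T_M} = ([\overline X,\overline Y])^{T_M} = [X,Y],
\end{equation*}
using torsion-freeness of $\overline\nabla$ and the fact that $[X,Y]$ is already tangential. Finally, for metric compatibility, let $X,Y,Z \in \mathfrak{X}(f(U))$ with extensions $\overline X,\overline Y,\overline Z$. Since the induced metric on $M$ agrees with $\langle\cdot,\cdot\rangle_{\overline M}$ on tangent vectors to $M$, and $X$ is tangent to $M$, one has
\begin{equation*}
X\langle Y,Z\rangle_M = \overline X\langle \overline Y,\overline Z\rangle_{\overline M} = \langle \overline{\nabla}_{\overline X}\overline Y,\overline Z\rangle_{\overline M} + \langle \overline Y,\overline{\nabla}_{\overline X}\overline Z\rangle_{\overline M}.
\end{equation*}
Now decompose $\overline{\nabla}_{\overline X}\overline Y$ into tangential plus normal parts; pairing against $\overline Z$, which is tangent to $M$ along $f(U)$, kills the normal part, leaving $\langle \nabla_X Y, Z\rangle_M$, and similarly for the other term. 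Hence $X\langle Y,Z\rangle_M = \langle\nabla_X Y,Z\rangle_M + \langle Y,\nabla_X Z\rangle_M$.

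Having established that $\nabla$ is a symmetric, metric-compatible connection on $M$, uniqueness in the fundamental theorem forces $\nabla$ to be the Riemannian (Levi-Civita) connection of the induced metric, which is the claim. I expect the only genuinely delicate point to be the bookkeeping around extensions — specifically, arranging extensions so that $[\overline X,\overline Y]$ restricts to $[X,Y]$ along the submanifold and confirming that all the identities above are evaluated at points of $f(U)$ where the relevant brackets and metrics coincide; the remaining verifications are routine consequences of the corresponding properties of $\overline\nabla$ together with the elementary observation that orthogonal projection onto $T_pM$ is linear and annihilates the normal bundle when paired with tangent vectors.
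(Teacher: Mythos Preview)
Your proposal is correct and follows essentially the same route as the paper: verify that $\nabla$ is symmetric and metric-compatible, then invoke the uniqueness part of the Levi--Civita theorem. The paper omits the well-definedness and connection-axiom checks you include, and it handles the identity $[\overline X,\overline Y]^{T_M}=[X,Y]$ via an explicit local-coordinate computation rather than your abstract restriction argument, but the overall strategy is identical.
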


\begin{proof}
We show that $\nabla$ satisfies the 2 properties in the Levi-Civita theorem as follows -
\newline\textbf{Symmetry}: For all $p \in M, \: f(p) \in f(M)$ with $x^i$ as the local coordinates of $M$
\begin{align}
(\nabla_XY - \nabla_YX)(p) &= \big( (\overline{\nabla}_{\overline{X}}\overline{Y})^{T_M} - (\overline{\nabla}_{\overline{Y}}\overline{X} )^{T_M} \big)(p) \nonumber \\
&= (\overline{\nabla}_{\overline{X}}\overline{Y} - \overline{\nabla}_{\overline{Y}}\overline{X})^{T_M}(p) \nonumber \\
&= [\overline{X}, \overline{Y}]^{T_M}(p) \nonumber 
\\
&= \Bigg ( \sum_{i,j=1}^{n+k} \left( \overline{X}^i \frac{\partial \overline{Y}^j}{\partial x^i} - \overline{Y}^i \frac{\partial \overline{X}^j}{\partial x^i} \right)\frac{\partial}{\partial x_j} \Bigg )^{T_M}(p) \nonumber \\
&= \Bigg ( \sum_{i=1}^{n}\sum_{j=1}^{n+k} \left( X^i \frac{\partial \overline{Y}^j}{\partial x^i} - Y^i \frac{\partial \overline{X}^j}{\partial x^i} \right)\frac{\partial}{\partial x_j} \Bigg )^{T_M}(p) \tag*{as $X^i,\: Y^i = 0$ for $i = n+1,.., n+k$} \nonumber \\
&= \Bigg ( \sum_{i,j=1}^{n} \left( X^i \frac{\partial Y^j}{\partial x^i} - Y^i \frac{\partial X^j}{\partial x^i} \right)\frac{\partial}{\partial x_j} \Bigg )(p) \nonumber \\
&= [X, Y] \nonumber 
\end{align}
\textbf{Compatibility with the metric on $M$}: 
\begin{align}
X \langle Y, Z \rangle (p) &= \overline{X} \langle \overline{Y}, \overline{Z} \rangle (p) \nonumber \\
&= \langle \overline{\nabla}_{\overline{X}}\overline{Y}, \overline{Z} \rangle(p) + \langle \overline{Y}, \overline{\nabla}_{\overline{X}} \overline{Z} \rangle(p) \nonumber \\
&= \langle (\nabla_X Y)^{T_M}, Z \rangle (p) + \langle Y, (\nabla_X Z)^{T_M} \rangle(p) \nonumber \\
&= \langle \nabla_X Y, Z \rangle (p) + \langle Y, \nabla_X Z \rangle(p) \nonumber 
\end{align}
This shows that $\nabla$ is compatible with the Riemannian metric on  $M$. Hence, using the Levi-Civita theorem \cite{do1992riemannian} $\nabla$ becomes the unique Riemannian connection on $M$ such that $\nabla_XY(p) = (\overline{\nabla}_{\overline{X}}\overline{Y})^{T_M}$.
\end{proof}

\section{}\label{appendixB}
\begin{theorem}\label{Levi Civita connection on trivial principal bundle}
Let $M$ and Lie group $G$ be Riemannian manifolds and consider a trivial principal bundle $(Q, \pi, M ,G)$ as Riemannian product manifold $M \times G$ with the induced product metric $g^M \oplus g^{\mathfrak{g}}$. Let $\overset{M}{\nabla}$ be the Riemannian connection on $M$ and $\overset{G}{\nabla}$ be that on $G$. To prove that the Riemannian connection on $Q$ is given by $\overset{Q}{\nabla}_{(Y_1,Y_2)} (X_1, X_2) = (\overset{M}{\nabla}_{Y_1}X_1, \overset{G}{\nabla}_{Y_2}X_2)$ where $X_1, Y_1 \in \mathfrak{X}(M)$ and $X_2, Y_2 \in \mathfrak{X}(G)$.

\begin{proof}
Let $p \in M, \: q\in G$. Consider $f_M: M \to M \times \{q\}$ and $f_G : G \to \{p \} \times G$ as the canonical maps. We note that $f_1$ and $f_2$ are immersions of $M, \, G$ respectively into $M \times G$. Then using theorem \ref{Levi-Civita connection on immersed submanifold} in \ref{appendixA} and that $\overset{Q}{\nabla}$ is the Levi-Civita connection on $M \times G$, we write that $\overset{M}{\nabla}_{X_1}Y_1(p) = (\overset{Q}{\nabla}_{\overline{X}}\overline{Y})^{T_M}$ and $\overset{G}{\nabla}_{X_2}Y_2(p) = (\overline{\nabla}_{\overline{X}}\overline{Y})^{T_G}$, where $\overline{X}, \, \overline{Y}$ are the extensions of $X_1, X_2$ and $Y_1, Y_2$ defined as follows for all $(a,b) \in M \times G$ -
\begin{equation}
\overline{X}(a,b) = (X_1(a,q), X_2(p,b)), \qquad \overline{Y}(a,b) = (Y_1(a,q), Y_2(p,b))
\end{equation}
Thus, we get
\begin{align}
(\overset{M}{\nabla}_{Y_1}X_1, \overset{G}{\nabla}_{Y_2}X_2) &= ((\overset{Q}{\nabla}_{\overline{Y}}{\overline{X}})^{T_{M}}, (\overset{G}{\nabla}_{\overline{Y}}{\overline{X}})^{T_G})) \\
&= \overset{Q}{\nabla}_{\overline{Y}}{\overline{X}} \\
&= \overset{Q}{\nabla}_{(Y_1, Y_2)}{(X_1,X_2})
\end{align}
\end{proof}
\end{theorem}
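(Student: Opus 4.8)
The plan is to compute $\overset{Q}{\nabla}$ directly and read off its block structure: the ``tangential'' pieces along the slices and fibres will come from Theorem~\ref{Levi-Civita connection on immersed submanifold} in Appendix~\ref{appendixA}, and the vanishing of the remaining pieces from the Koszul formula for $\overset{Q}{\nabla}$ (equivalently, from torsion-freeness and compatibility with $g^M\oplus g^{\mathfrak{g}}$, cf.\ \cite{do1992riemannian}). Everything rests on one structural remark about \eqref{gQ defintion}: in product coordinates $(x^i)$ on $M$ and $(g^a)$ on $G$ the metric on $Q$ is block diagonal, its $M$-block depends only on $x$, its $G$-block only on $g$, and there are no cross terms. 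I record the three consequences I would use repeatedly. Calling a field of the form $(Z_1,0)$ with $Z_1\in\mathfrak{X}(M)$ an \emph{$M$-lift} and one of the form $(0,Z_2)$ a \emph{$G$-lift}: (i) $\langle\cdot,\cdot\rangle_Q$ pairs an $M$-lift with a $G$-lift to zero; (ii) the inner product of two $M$-lifts is a function of $x$ alone, hence is annihilated by differentiation along any $G$-direction, and symmetrically; (iii) $[\,(Z_1,0),(0,Z_2)\,]=0$, while $[\,(Z_1,0),(W_1,0)\,]$ is again an $M$-lift and $[\,(0,Z_2),(0,W_2)\,]$ a $G$-lift.

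First I would fix $p\in M$, $q\in G$. The canonical inclusions $f_M:M\to M\times\{q\}$ and $f_G:G\to\{p\}\times G$ are isometric embeddings for the induced metric, since by (i) the product metric restricts to $g^M$ on each slice and to $g^{\mathfrak{g}}$ on each fibre. Applying Theorem~\ref{Levi-Civita connection on immersed submanifold} to the embedded slice $M\times\{q\}$ with the extensions $\widetilde{X}_1(a,b)=(X_1(a),0)$ and $\widetilde{Y}_1(a,b)=(Y_1(a),0)$ of the slice vector fields corresponding to $X_1,Y_1$ gives the tangential identity $(\overset{Q}{\nabla}_{\widetilde{Y}_1}\widetilde{X}_1)^{T_M}=\overset{M}{\nabla}_{Y_1}X_1$, and symmetrically, with $\widetilde{X}_2(a,b)=(0,X_2(b))$ and $\widetilde{Y}_2(a,b)=(0,Y_2(b))$, one gets $(\overset{Q}{\nabla}_{\widetilde{Y}_2}\widetilde{X}_2)^{T_G}=\overset{G}{\nabla}_{Y_2}X_2$ along $\{p\}\times G$.

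It then remains to show that these tangential parts are everything, i.e. that the slices and fibres are totally geodesic and the mixed derivatives vanish:
\[
(\overset{Q}{\nabla}_{\widetilde{Y}_1}\widetilde{X}_1)^{T_G}=0,\quad \overset{Q}{\nabla}_{\widetilde{Y}_1}\widetilde{X}_2=0,\quad \overset{Q}{\nabla}_{\widetilde{Y}_2}\widetilde{X}_1=0,\quad (\overset{Q}{\nabla}_{\widetilde{Y}_2}\widetilde{X}_2)^{T_M}=0.
\]
I would get each of these by pairing against an arbitrary $M$-lift and against an arbitrary $G$-lift and expanding with the Koszul formula for $\overset{Q}{\nabla}$: in each of the six Koszul terms at least one factor is killed by (i), (ii) or (iii), so the pairing is zero. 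Combined with the previous paragraph this gives $\overset{Q}{\nabla}_{\widetilde{Y}_1}\widetilde{X}_1=(\overset{M}{\nabla}_{Y_1}X_1,0)$ and $\overset{Q}{\nabla}_{\widetilde{Y}_2}\widetilde{X}_2=(0,\overset{G}{\nabla}_{Y_2}X_2)$. Finally, writing the product fields as $\overline{X}=(X_1,X_2)=\widetilde{X}_1+\widetilde{X}_2$ and $\overline{Y}=(Y_1,Y_2)=\widetilde{Y}_1+\widetilde{Y}_2$ and using additivity of $\overset{Q}{\nabla}$ in both arguments,
\[
\overset{Q}{\nabla}_{\overline{Y}}\overline{X}=\overset{Q}{\nabla}_{\widetilde{Y}_1}\widetilde{X}_1+\overset{Q}{\nabla}_{\widetilde{Y}_1}\widetilde{X}_2+\overset{Q}{\nabla}_{\widetilde{Y}_2}\widetilde{X}_1+\overset{Q}{\nabla}_{\widetilde{Y}_2}\widetilde{X}_2=(\overset{M}{\nabla}_{Y_1}X_1,\overset{G}{\nabla}_{Y_2}X_2),
\]
which is the asserted formula.

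I expect the third step to be the only real obstacle: Theorem~\ref{Levi-Civita connection on immersed submanifold} supplies only the tangential components of $\overset{Q}{\nabla}$ along a slice and a fibre, and one genuinely has to rule out second-fundamental-form and mixed contributions before the two pieces can be glued into a formula valid on all product fields --- this is where the block structure (i)--(iii) does the work; the rest is bookkeeping. A reader who prefers can bypass Theorem~\ref{Levi-Civita connection on immersed submanifold} entirely and argue coordinatewise: the block-diagonal, factorwise-dependent metric has $\Gamma^k_{ij}$ equal to the Christoffel symbols of $g^M$, $\Gamma^c_{ab}$ equal to those of $g^{\mathfrak{g}}$, and all mixed Christoffel symbols zero, which yields the same conclusion at once.
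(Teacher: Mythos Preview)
Your proposal is correct and follows the same architecture as the paper's own proof: embed $M$ and $G$ as the slice $M\times\{q\}$ and the fibre $\{p\}\times G$ via the canonical inclusions, then invoke Theorem~\ref{Levi-Civita connection on immersed submanifold} to identify the tangential components of $\overset{Q}{\nabla}$ along each factor with $\overset{M}{\nabla}$ and $\overset{G}{\nabla}$. The difference is one of completeness. The paper works with a single pair of extensions $\overline{X}=(X_1,X_2)$, $\overline{Y}=(Y_1,Y_2)$ and passes directly from $\big((\overset{Q}{\nabla}_{\overline{Y}}\overline{X})^{T_M},(\overset{Q}{\nabla}_{\overline{Y}}\overline{X})^{T_G}\big)$ to $\overset{Q}{\nabla}_{\overline{Y}}\overline{X}$ via the splitting $T_{(p,q)}Q=T_pM\oplus T_qG$, without isolating the mixed contributions $\overset{Q}{\nabla}_{\widetilde{Y}_1}\widetilde{X}_2$, $\overset{Q}{\nabla}_{\widetilde{Y}_2}\widetilde{X}_1$ or the normal (second-fundamental-form) pieces and without checking that Theorem~\ref{Levi-Civita connection on immersed submanifold} actually applies to those particular extensions on each slice. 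Your Koszul argument based on the block structure (i)--(iii) supplies precisely that verification, so your write-up is strictly more rigorous while remaining in the same framework. The coordinate alternative you mention at the end (block-diagonal, factorwise-dependent metric $\Rightarrow$ block Christoffel symbols with all mixed symbols zero) does not appear in the paper but is an equally valid and arguably quicker route.
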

\section*{References}
\bibliography{bib_IROS_IFACJSC_DMP_formationControl}

\end{document}